\newcommand{\tup}[1]{\langle #1 \rangle}
\newcommand{\cali}[1]{\mathcal{#1}}
\newcommand{\false}{\textit{false}}
\newcommand{\past}{\textit{past}}
\newcommand{\lca}{\textit{LCA}}
\newcommand{\order}{\textit{Order}}
\DeclarePairedDelimiter\floor{\lfloor}{\rfloor}
\newcommand{\mytodo}[2]
{\ifnum\Comments=1
	{\marginpar{\small{\color{#1}	#2}}}\fi}
\newcommand{\nosemic}{\renewcommand{\@endalgocfline}{\relax}}
\newcommand{\dosemic}{\renewcommand{\@endalgocfline}{\algocf@endline}}
\let\oldnl\nl
\newcommand{\nonl}{\renewcommand{\nl}{\let\nl\oldnl}}
\begin{document}

\title{Better Incentives for Proof-of-Work}

\author{Jakub Sliwinski, Roger Wattenhofer}
\email{}
\institute{ETH Zurich \\
\email{{jsliwinski,wattenhofer}@ethz.ch}}


\maketitle

\begin{abstract}
This work proposes a novel proof-of-work blockchain incentive scheme such that, barring exogenous motivations, following the protocol is guaranteed to be the optimal strategy for miners. Our blockchain takes the form of a directed acyclic graph, resulting in improvements with respect to throughput and speed.

More importantly, for our blockchain to function, it is not expected that the miners conform to some presupposed protocol in the interest of the system's operability. Instead, our system works if miners act selfishly, trying to get the maximum possible rewards, with no consideration for the overall health of the blockchain.
\end{abstract}

\section{Introduction}\label{sec:intro}

A decade ago, Satoshi Nakamoto presented his now famous Bitcoin protocol \cite{nakamoto2008bitcoin}. Nakamoto assembled some stimulating techniques in an attractive package, such that the result was more than just the sum of its parts.

The Bitcoin blockchain promises to order and store transactions meticulously, despite being anarchistic, without a trusted party. Literally anybody can participate, as long as ``honest nodes collectively control more CPU power than any cooperating group of attacker nodes." \cite{nakamoto2008bitcoin}

In Section 6 of his seminal paper, Nakamoto argues that it is rational to be honest thanks to block rewards and fees. However, it turns out that Nakamoto was wrong, and rational does not imply honest. If a miner has a fast network \textit{and/or} a significant fraction of the hashing power, the miner may be better off by not being honest, holding blocks back instead of immediately broadcasting them to the network \cite{eyal2014selfish}.

If the material costs and payoffs of mining are low, one can argue that the majority of miners will want to remain honest. After all, if too many miners stop conforming to the protocol, the system will break down. However, the costs and payoffs of participation vary over time, and a majority of miners remaining altruistic is never guaranteed. Strategies outperforming the protocol may or may not be discovered for different blockchain incentive designs. However, as long as it is not proven that no such sophisticated strategy exists, the system remains in jeopardy.

\subsection{Blockchain Game}\label{sec:basics}

Typical blockchains, such as Bitcoin's, take the form of a rooted tree of blocks. During the execution of the protocol, players continually create new blocks that are appended to the tree as new leaves. Creating blocks is computationally intensive, so that the network creates a specific number of blocks in a given time period, such as one block every ten minutes on average in Bitcoin. One path of blocks, such as the longest path, is distinguished as the main chain and keeps being extended by addition of new leaves. The network's participants want to create blocks that remain incorporated into the main chain, as these blocks are rewarded. Ideally, the leaves would be added in sequence, each leaf appended to the previous leaf. However, by chance or malice, it is inevitable that some leaves are appended to the same block and create a ``fork". Then, it is uncertain which one will end up extending the main chain. According to typical solutions, one of the competing leaves is eventually chosen as being in the main chain, and the creator of the other leaf misses out on block rewards. This approach introduces some unwanted incentives and a potential to punish other players. Even worse, some factors such as network connectivity start to play a role and might influence the behaviour of players.

\subsection{Our Contribution}\label{sec:contrib}

We propose a blockchain design with an incentive scheme guaranteeing that deviating from the protocol strictly reduces the overall share and amount of rewards. All players following the protocol constitute a strict, strong Nash equilibrium. Our approach is to ensure that creating a fork will always be detrimental to all parties involved. Our design allows blocks to reference more than one previous block; in other words, the blocks form a directed acyclic graph (DAG). We prove that miners creating a new block have an incentive to always reference all previously unreferenced blocks. Hence, all blocks are recorded in the blockchain and no blocks are discarded.

\subsection{Intuitive Overview}

In Section \ref{sec:model} we describe the terms to define our protocol.

In Section \ref{sec:dag} we explain the protocol and how to interpret the created DAG. In terms of security, our design is identical to known proof-of-work blockchains, as similarly to other protocols, we identify the main chain to achieve consensus. Intuitively, each new block should reference all previous terminal blocks known to the miner and automatically extend the main chain. In Subsection \ref{sec:order}, we explain how to use the main chain to process and totally order all blocks \cite{lewenberg2015inclusive}.

In Section \ref{sec:rewards} we construct and discuss our reward scheme.

In Subsection \ref{sec:stale} we explain how to label some blocks as stale, such that blocks mined by honest miners are not labeled as stale, but blocks withheld for a long time are labeled as stale. Stale blocks do not receive any rewards.

The core idea of the incentive scheme is to penalize every block by a small amount for every block that it ``competes" with.

In Section \ref{sec:related} we discuss related work.

\section{Model and Preliminaries}\label{sec:model}

\subsection{Rounds}

We assume a network with a message diffusion mechanism that delivers messages to all connected parties (similarly to Bitcoin's network).

Similarly to foundational works in the area \cite{garay2015bitcoin} we express the network delay in terms of rounds. Communication is divided into rounds, such that when a player broadcasts a message, it will be delivered to all parties in the network in the next round. Thus each round can be viewed as: 1) receiving messages sent in the previous round, 2) computing (mining) new blocks, 3) broadcasting newly found blocks to all other players.

Rounds model the network delay for the purpose of analysis. However, the protocol itself is not concerned with the division of time into rounds in any way, and only relies on the network delay being correspondingly bounded.

\subsection{Players}

To avoid confusion in how we build on previous work, we stick to the usual terminology of {\it honest} players and an {\it adversary}. The players that conform to the protocol are called honest. A coalition of all parties that considers deviating from the protocol is controlled by an adversary. We gradually introduce new elements, and eventually show that by deviating from the protocol, the adversary reduces its share and amount of rewards. Hence, rational becomes synonymous with honest.

The adversary constitutes a minority as described in Section \ref{sec:mining}, otherwise the adversary can take over the blockchain by simply ignoring all actions by honest players.

The adversary is also more powerful than honest players. First of all, we consider the adversary as a single entity. The adversary does not have to send messages to itself, so the mine/send/receive order within a round does not apply to the adversary. Moreover, the adversary gets to see all messages sent by honest players in round $r$ before deciding its strategy of round $r$. After seeing the honest messages, the adversary is not allowed to create new blocks again in this round. Moreover, the adversary controls the order that messages arrive to each player.

\subsection{Blocks}

Blocks are the messages that the players exchange, and a basic unit of the blockchain. Formally, a block $B$ is a tuple $B = \tup{\cali T_B, \cali R_B, c, \eta}$, where: 
\begin{itemize}
    \item $\cali T_B$ is the content of the block
    \item $\cali R_B$ is a set of references (hashes) to previously existing blocks, i.e. $\cali R_B =  \{h(B_1),\dots,h(B_m)\}$
    \item  $c$ is a public key of the player that created the block
    \item $\eta$ is the proof-of-work nonce, i.e., a number such that for a hash function $h$ and difficulty parameter $D$, $h(B) < D$ holds.
\end{itemize}

The content of the block $\cali T_B$ depends on the application. In general, $\cali T_B$ contains some information that the block creator wishes to record in the blockchain for all participants to see. We consider blockchain properties independently of the content $\cali T_B$. The content $\cali T_B$ is discussed in Section \ref{sec:blockcontents} of the appendix.

The creator of $B$ holds the private key corresponding to $c$. The creator can later use the key to withdraw the reward for creating $B$. The amount of reward is automatically determined by the protocol, and at the core of our contribution in Section \ref{sec:rewards}.

\subsection{DAG}

$\cali R_B$ includes at least one hash of a previous block, which might be the hash of a special {\it genesis} block $\tup{\emptyset, \emptyset, \bot, 0}$. The hash function is pre-image resistant, i.e. it is infeasible to find a message given its hash. If a block $B'$ includes a reference to another block $B$, $B'$ must include $h(B)$, and hence has to be created after $B$.

A directed cycle of blocks is impossible, as the block which was created earliest in such a cycle cannot include a hash to the other blocks that were created later. Consequently, the blocks always form a directed acyclic graph ({\it DAG}) with the genesis block as the only root (block without any parent) of this DAG.

\subsection{Mining}\label{sec:mining}

Creating a new block is achieved by varying $\eta$ to find a hash value that is smaller than the difficulty parameter $\cali D$, i.e., $h(\tup{\cali T_B, \cali R_B, c, \eta}) < \cali D$. Creating blocks in this way is called {\it mining}. Blocks are called honest if mined by an honest player, or adversarial if mined by the adversary.

By varying $\cali D$, the protocol designer can set the probability of mining a block with a single hashing query arbitrarily. The difficulty $\cali D$ could also change during the execution of the protocol to adjust the rate at which blocks are created. We leave the details of changing $\cali D$ to future work, and assume $\cali D$ to be constant.

The honest players control the computational power to mine $\alpha$ blocks in expectation in one round. The computational power of the adversary is such that the expected number of blocks the adversary can mine in one round is equal to $\beta$. The adversary does not experience a delay in communication with itself, so the adversary might mine multiple blocks forming a chain in one round.

\paragraph{Assumptions} The following assumptions are made in order to satisfy the prerequisites of Lemma \ref{lem:freshblock}, which was proven in \cite{kiayias2017trees}. Lemma \ref{lem:freshblock} links our work to traditional blockchains. Intuitively, the lemma states that a traditional blockchain works with respect to the most basic requirement. If one believes a blockchain to function in this basic way under some other assumptions, those assumptions can be used instead, and our results would apply in the same way.

Because of the delay in communication, the effective computational power of the honest players corresponds to the probability $\alpha' \approx \alpha e^{-\alpha}$ \cite{kiayias2017trees} that in a given round exactly one honest player mines a block.

\begin{enumerate}
\item The honest players have more mining power: $\alpha' \geq \beta (1 + \epsilon)$ for a constant $\epsilon > 0$.

\item The difficulty $D$ is set such that the expected number of blocks mined within one round is less than one: $\alpha + \beta < 1$.
\end{enumerate}

\subsection{Action Space}

The state of the blockchain is only updated through discovery and broadcasting of new blocks, hence the adversary can only vary its behaviour with respect to the following factors:

\begin{itemize}
    \item the blocks being mined i.e. the contents, the included references etc.
    \item when to announce any of the mined blocks
    \item the set of agents to whom to send a given block.\footnote{Honest agents disseminate all received blocks, so by sending a block to a subset of agents, the adversary can delay other agents from seeing a block for only one round.}
\end{itemize}

\section{The Block DAG}\label{sec:dag}

The protocol by which the honest players construct the block DAG is simple:
\begin{itemize}
\item Attempt to mine new blocks.
\item Reference in $\cali R_B$ all unreferenced blocks observed.
\item Broadcast newly mined blocks immediately.
\end{itemize}

Each player stores the DAG formed by all blocks known to the player. For each block $B$, one of the referenced blocks $B_i$ is the parent $B_i = P(B)$, and $B$ is the child of $P(B)$. The parent is automatically determined based on the DAG structure. The parent-child edges induce the {\it parent tree} from the DAG.

The players use Algorithm \ref{algo:ghost} by \cite{sompolinsky2015secure} to select a chain of blocks going from the genesis block to a leaf in the parent tree. The selected chain represents the current state of the blockchain; it is called the {\it main} chain. The main chain of a player changes from round to round. Players adopt main chains that may be different from each other, depending on the blocks observed.

\begin{algorithm}
\DontPrintSemicolon
\KwIn{a block tree $T$}
\KwOut{block $B$ - the end of the selected chain}
$B \gets \textit{genesis}$ \tcp*[r]{start at the genesis block.}
\While{$B \textit{ has a child in } T$}{
$B \gets \textit{ heaviest child of } B$
\tcp*[r]{continue with the child of B}
\tcp*[r]{with most nodes in its subtree.}
}
\Return $B$

\caption{Main chain selection algorithm.}
\label{algo:ghost}
\end{algorithm}

Let $\past(B)$ denote the set of blocks reachable by references from $B$ and the DAG formed by those blocks. The protocol dictates referencing all blocks that otherwise would not be included in $\past(B)$. Then, by creating a new block $B$, the creator communicates only being aware of blocks in $\past(B)$. Based on $\past(B)$, we determine $P(B)$ as the end of the main chain (Algorithm \ref{algo:ghost}) of the DAG of the player when creating a new block $B$ \cite{lewenberg2015inclusive}.

\begin{definition}[Determining Parent]\label{def:parent}
For a given block $B$, the block returned by Algorithm \ref{algo:ghost} in the parent tree of $\past(B) \setminus \{B\}$ is the parent of $B$.
\end{definition}

Lemma \ref{lem:freshblock} by \cite{kiayias2017trees}, encapsulates the notion that a blockchain (represented by the parent tree in our description) functions properly with respect to a basic requirement. Intuitively, it states that from any point in time, the longer one waits, the more probable it becomes that some honest block mined after that point in time is contained in a main chain of each honest player. The probability of the contrary decreases exponentially with time.

\begin{lemma}[Fresh Block Lemma]\label{lem:freshblock}
For all $r,\Delta \in \mathbb{N}$, with probability $1 - e^{-\Omega(\Delta)}$, there exists a block mined by an honest player on or after round $r$ that is contained in the main chain of each honest player on and after round $r+\Delta$.
\end{lemma}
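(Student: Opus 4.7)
The plan is to combine the honest-majority condition $\alpha' \geq (1+\epsilon)\beta$ with Chernoff concentration of the block-production process, and to exploit the GHOST selection rule to show that some honest block mined early in the window $[r, r+\Delta]$ becomes locked into every honest player's main chain by round $r+\Delta$.

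First I would quantify honest and adversarial production on and around the window. Over any interval of length $\Theta(\Delta)$ let $L$ count \emph{uniquely successful} rounds (those in which exactly one honest player mines a block, occurring with per-round probability $\alpha' \approx \alpha e^{-\alpha}$) and let $A$ count adversarial blocks (per-round expectation $\beta$). Standard Chernoff bounds yield $L \geq (1+\epsilon/2)\,A$ and $L = \Omega(\alpha' \Delta)$ simultaneously, except on an event of probability $e^{-\Omega(\Delta)}$. The $e^{-\alpha}$ factor in $\alpha'$ encodes the effective reduction of honest mining power caused by the one-round network delay; I would include a buffer of $\Theta(\Delta)$ rounds before $r$ so that adversarial blocks pre-committed to sibling branches of $B^*$'s ancestors are counted in $A$.

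Second, conditioning on this high-probability event, I would pick the first uniquely successful round $r^* \in [r,\, r + \Delta/4]$ (guaranteed by the lower bound on $L$) and let $B^*$ be the single honest block mined at $r^*$. Because $B^*$ is broadcast in round $r^*+1$, it is visible to every honest player thereafter. The claim to prove is then that from round $r+\Delta$ onward, every honest player's traversal of Algorithm \ref{algo:ghost} starting at the genesis must pass through $B^*$; equivalently, at every ancestor $P$ on the genesis-to-$B^*$ path in the parent tree, the subtree containing $B^*$ is strictly heavier than each sibling subtree in every honest player's DAG.

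The main obstacle is this GHOST-specific weight accounting across the forks above $B^*$. I would handle it by a matching argument in the spirit of \cite{kiayias2017trees}: associate each adversarial block with at most one uniquely successful honest block, so that $L - A = \Omega(\epsilon \Delta)$ honest blocks remain unmatched. An inductive argument along the path from the genesis shows that once the $B^*$-containing subtree becomes heaviest at a fork in some honest player's view, every subsequent uniquely successful round produces a block whose Algorithm \ref{algo:ghost} parent lies on the $B^*$-side, hence reinforcing that subtree for all other players in the following round. The unmatched surplus of honest blocks therefore causes the $B^*$-side subtree to overtake any sibling, and never to be overtaken back, at each fork above $B^*$ within $O(\Delta)$ rounds. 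Combining this with the buffer before $r$ yields convergence by round $r + \Delta$, giving the lemma with probability $1 - e^{-\Omega(\Delta)}$.
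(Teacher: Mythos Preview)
The paper does not prove this lemma at all: it is imported verbatim from \cite{kiayias2017trees} and used as a black box (see the Assumptions paragraph in Section~\ref{sec:mining} and the sentence introducing Lemma~\ref{lem:freshblock}). So there is no ``paper's own proof'' to compare against; your proposal is an attempt to reconstruct the argument of \cite{kiayias2017trees}, not of this paper.

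As a reconstruction, your outline has the right architecture---Chernoff on uniquely successful rounds versus adversarial blocks, a buffer before $r$, and a weight-surplus argument for the GHOST rule---but the step you label ``the main obstacle'' is also where your sketch is least solid. The assertion that ``once the $B^*$-containing subtree becomes heaviest at a fork in some honest player's view, every subsequent uniquely successful round produces a block whose Algorithm~\ref{algo:ghost} parent lies on the $B^*$-side'' does not follow immediately: the honest miner of that uniquely successful round runs GHOST on \emph{her own} view, which may differ from the view in which the $B^*$-side just became heaviest, and she may still select a sibling branch at some shallower fork. The actual proof in \cite{kiayias2017trees} handles this by tracking weight differences simultaneously across all honest views (using the one-round delay explicitly) rather than by a per-fork induction, and it does not commit in advance to a particular block $B^*$ but argues existence via a common-prefix / convergence-opportunity count. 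Your matching idea can be made to work, but the inductive invariant needs to be stated over all honest views at once, not over a single player's tree.
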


Lemma \ref{lem:freshblock} can be proved with respect to other chain selection rules, for instance picking the child with the longest chain instead of the heaviest child as in Algorithm \ref{algo:ghost}. Our work can be applied equally well using such chain selection rules.

If the protocol designer has control over some factor $x$, probability of the form $e^{-\Omega(x)}$ can be set arbitrarily low with relatively small variation of $x$. Probability of the form $e^{-\Omega(x)}$ is called negligible.\footnote{Probabilities of this form are often disregarded completely in proofs \cite{pass2017fruitchains}.}

\subsection{Block Order}\label{sec:order}

We will now explain, how all blocks reachable by references will be ordered, following the algorithm of \cite{lewenberg2015inclusive}. According to the resulting order, the contents of blocks that fall outside of the main chain can be processed, as if all blocks formed one chain.

\begin{definition}\label{def:order}
Each player processes blocks in the order $\order(B)$, where $B$ is the last block of the main chain.
\end{definition}

\begin{algorithm}
\DontPrintSemicolon
\KwIn{a block $B$}
\KwOut{a total order of all blocks in $\past(B)$}
On the first invocation, $\textit{visited}(\cdot)$ is initialized to $\false$ for each block.

\lIf{$\textit{visited}(B)$}{\Return $\emptyset$}
$\textit{visited}(B) \gets \textit{true}$ \tcp*[r]{Blocks are visited depth-first.}
\lIf{$B = \text{genesis}$}{
\Return $(B)$
}
$O \gets \order(P(B))$
\tcp*[r]{Get the order of $P(B)$ recursively.}
\For{$i = 1,\dots,m$}{
$O \gets O.\textit{append}(\order(B_i))$
\tcp*[r]{Append newly included blocks.}
}
$O \gets O.\textit{append}(B)$ \tcp*[r]{Append $B$ at the end.}
\Return $O$\;

\caption{$\order(B)$: a total order of blocks in $\past(B)$.}
\label{algo:inclusive}
\end{algorithm}

Note the order of executing the FOR loop in line 6 of the Algorithm \ref{algo:inclusive} has to be the same for each player for them to receive consistent orders of blocks. Algorithm \ref{algo:inclusive} processes $B_i$'s in the order of inclusion in $\cali R_B$, but the order could be alphabetical or induced by the chain selection rule.

Based on lines numbered 5-8 we can state Corollary \ref{cor:incorder}.

\begin{corollary}\label{cor:incorder}
$\order(B)$ extends $\order(P(B))$ by appending all newly reachable blocks not included yet in $\order(P(B))$.
\end{corollary}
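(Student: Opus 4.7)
The plan is to verify Corollary~\ref{cor:incorder} essentially by inspection of lines 5--8 of Algorithm~\ref{algo:inclusive}, after establishing the following invariant for the recursive procedure: immediately after any completed call to $\order(X)$, every block of $\past(X)$ is marked visited, and every block of $\past(X)$ that was not already visited at the start of that call has been appended to the output exactly once. This invariant follows straightforwardly from lines 2--3 together with the fact that each visited block is appended exactly once, at line 4 in the genesis base case or at line 8 in the recursive case.

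Given the invariant, the corollary splits cleanly. Line 5 sets $O \gets \order(P(B))$, so the output of $\order(B)$ literally begins with the prefix $\order(P(B))$; by the invariant, the visited set after this subcall is exactly $\past(P(B))$, and the guard at line 2 then prevents lines 6--8 from ever reappending a block of $\past(P(B))$. Conversely, every block appended in lines 6--8 is either $B$ itself or lies in some $\past(B_i)$ reached by the recursion, hence in $\past(B)$.

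It therefore remains to show that every $C \in \past(B) \setminus \past(P(B))$ is appended by lines 6--8. Since $C$ is reachable from $B$ but not from $P(B)$, some reference path from $B$ to $C$ must leave $B$ through some $B_i \in \cali R_B \setminus \{P(B)\}$, giving $C \in \past(B_i)$. The auxiliary claim built into the invariant, applied to $\order(B_i)$, then forces $C$ to be appended during that subcall unless it was already appended during an earlier subcall $\order(B_j)$ with $j < i$; in either case $C$ lies in the suffix of $O$ after line 7, and the final append of $B$ at line 8 completes the picture.

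The main obstacle is the inductive step of the invariant itself: when $\order(X)$ encounters a reference $Y$ whose visited flag is already set, line 2 short-circuits and $\past(Y)$ is never explored, so one must argue that no block of $\past(Y)$ is thereby missed. The invariant resolves this cleanly: $\textit{visited}(Y) = \true$ implies every block of $\past(Y)$ has already been appended by some earlier call, so skipping the subtree rooted at $Y$ loses nothing. Once this is spelled out, the rest is routine bookkeeping over the DAG of references.
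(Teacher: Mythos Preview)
Your argument is correct and follows exactly the route the paper indicates: the paper does not give a formal proof at all but simply states that the corollary is read off from lines~5--8 of Algorithm~\ref{algo:inclusive}, and your invariant-based write-up is a careful unpacking of precisely that observation. One tiny slip: after line~5 the visited set is $\{B\}\cup\past(P(B))$ rather than ``exactly $\past(P(B))$'' (line~3 has already marked $B$), but this does not affect any subsequent step.
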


\begin{lemma}\label{lem:allincluded}
Any announced block becomes referenced by a block contained in the main chain of any honest player after $\Delta$ rounds with probability $1 - e^{-\Omega(\Delta)}$.
\end{lemma}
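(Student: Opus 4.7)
The plan is to reduce the statement to a direct invocation of the Fresh Block Lemma (Lemma \ref{lem:freshblock}). Let $r$ be the round in which $B$ is announced. By the network-delay assumption, every honest player has $B$ in its local DAG no later than round $r+1$. So from round $r+1$ onward, every honest miner creating a new block either has $B$ unreferenced in its view (and thus will include $h(B)$ in $\cali R_{\cdot}$ by the honest protocol), or has $B$ already transitively reachable from some block it does reference.

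Next, I would apply Lemma \ref{lem:freshblock} starting at round $r+1$ with parameter $\Delta-1$: with probability $1 - e^{-\Omega(\Delta)}$, there exists a block $B^{\ast}$ mined by an honest player on or after round $r+1$ that sits on the main chain of every honest player from round $r+\Delta$ onward. Because $B^{\ast}$ was mined honestly at a round in which $B$ was already visible to the miner, the honest rule for constructing $\cali R_{B^{\ast}}$ guarantees that either $h(B) \in \cali R_{B^{\ast}}$, or there is some $B' \in \cali R_{B^{\ast}}$ with $B \in \past(B')$. In either case $B \in \past(B^{\ast})$, i.e., $B$ is reached by following references from $B^{\ast}$. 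Since $B^{\ast}$ itself lies on the main chain of every honest player after $\Delta$ rounds, this establishes the lemma with the stated probability.

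There is no real obstacle here beyond bookkeeping: the only subtle point is that we must start the invocation of Lemma \ref{lem:freshblock} at round $r+1$ rather than $r$ so that the honest miner of $B^{\ast}$ has had a full round to receive $B$ before mining; the resulting shift of $1$ is absorbed into the $e^{-\Omega(\Delta)}$ bound. I would also note explicitly that ``referenced by'' is to be read transitively (reachable via $\cali R$-edges), which is the natural notion in the DAG setting and matches how $\past(\cdot)$ is defined in Section \ref{sec:dag}.
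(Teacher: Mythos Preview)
Your proposal is correct and follows essentially the same route as the paper: invoke the Fresh Block Lemma to obtain an honest block $A$ on every honest main chain after $\Delta$ rounds, and then use honesty of $A$ to conclude $B \in \past(A)$. Your version is in fact slightly more careful than the paper's, which does not spell out the one-round shift needed so that the honest miner of $A$ has already received $B$; the paper simply asserts ``since $A$ is honest, $B \in \past(A)$'' and absorbs the bookkeeping implicitly.
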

\begin{proof}
Suppose a block $B$ is announced at round $r$. By Lemma \ref{lem:freshblock}, some honest block $A$ mined in the following $\Delta$ rounds is contained in the main chains adopted by honest players after round $r+\Delta$. Since $A$ is honest, $B \in \past(A)$.
\qed \end{proof}

By Lemma \ref{lem:allincluded} all announced blocks are eventually referenced in the main chains of honest players. Since for the purpose of achieving consensus we rely on the results of \cite{kiayias2017trees} and \cite{lewenberg2015inclusive}, we state Corollary \ref{cor:consensus}.

\begin{corollary}\label{cor:consensus}
The protocol achieves consensus properties corresponding to \cite{kiayias2017trees} and \cite{lewenberg2015inclusive}.
\end{corollary}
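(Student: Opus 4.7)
The plan is to reduce the corollary to the two prior results by a structural matching argument, rather than redeveloping any consensus machinery from scratch. First I would observe that the parent map $P(\cdot)$ induced on the DAG is exactly the edge set of a tree rooted at the genesis block, and that Algorithm \ref{algo:ghost} is applied to this tree in precisely the way that the chain selection rule of \cite{kiayias2017trees} is applied to a Bitcoin-style block tree. Thus, restricting attention to the parent tree, our protocol is indistinguishable from the reference model, and all chain-quality, common-prefix, and chain-growth style properties of \cite{kiayias2017trees} carry over verbatim. The Fresh Block Lemma (Lemma \ref{lem:freshblock}), which we already imported, is the concrete consequence we will use downstream.

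Next I would lift these properties from the parent tree to the full DAG using the inclusive ordering of \cite{lewenberg2015inclusive}. The key point is that Algorithm \ref{algo:inclusive} is deterministic given $B$ (the tie-breaking in the loop is fixed across players), so two players who agree on the final block of the main chain compute identical orders $\order(B)$. By the common-prefix property inherited from \cite{kiayias2017trees}, honest players agree on a prefix of the main chain of arbitrary depth except with negligible probability, and by Corollary \ref{cor:incorder} the associated prefix of $\order(\cdot)$ is stable under extension. Combined, these two facts yield a common prefix of the total block order across honest players, which is the consensus notion of \cite{lewenberg2015inclusive}.

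Finally, to obtain the inclusiveness/liveness half, I would invoke Lemma \ref{lem:allincluded}: every announced block is referenced in the main chain of every honest player within $\Delta$ rounds except with probability $e^{-\Omega(\Delta)}$, so by Corollary \ref{cor:incorder} its position in $\order$ is fixed within the same window. Together with the common-prefix argument of the previous paragraph, this matches the ordering guarantees of \cite{lewenberg2015inclusive} one-to-one.

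The main obstacle I expect is purely a bookkeeping one: verifying that the tie-breaking and stopping conventions in Algorithm \ref{algo:inclusive} are compatible with those in \cite{lewenberg2015inclusive}, and that the parent tree derived via Definition \ref{def:parent} really satisfies the black-box hypotheses of \cite{kiayias2017trees} (in particular that honest parents are chosen according to the GHOST-style rule on a tree whose honest/adversarial block distribution is the one assumed there). Once this alignment is made explicit, the corollary follows from the two cited results with no additional probabilistic content beyond Lemmas \ref{lem:freshblock} and \ref{lem:allincluded}.
\qed
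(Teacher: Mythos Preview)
Your proposal is correct and matches the paper's approach: both treat the corollary as a direct reduction to \cite{kiayias2017trees} and \cite{lewenberg2015inclusive} via the parent-tree structure and the inclusive ordering, with Lemma~\ref{lem:allincluded} supplying the inclusiveness part. The paper in fact gives no formal proof at all---it simply states the corollary after observing that Lemma~\ref{lem:allincluded} ensures all announced blocks reach the main chain and that consensus is inherited from the cited works---so your sketch is considerably more detailed than what the paper provides, but not different in substance.
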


\section{Reward Schemes}\label{sec:rewards}

\subsection{Stale Blocks}\label{sec:stale}

We now introduce a mechanism to distinguish blocks that were announced within a reasonable number of rounds from blocks that were withheld by the miner for an extended period of time. Such withheld blocks are called {\it stale}. Honest miners broadcast their blocks immediately, so stale blocks can be attributed to the adversary. In our incentive scheme, stale blocks will not receive any rewards and will also be ignored for the purpose of determining other block rewards. Thus we ensure that it is pointless for the adversary to wait too long before broadcasting its blocks.

The basic definition of whether a block $A$ is stale is termed with respect to some other block $B$. We are only interested in blocks $B$ that form the main chain. When the main chain is extended, the sets of stale and non-stale blocks are preserved (and extended). Hence, stale-ness is determined by the eventual main chain.

\begin{definition}
Given a block $B$, the set of stale blocks $S_B$ is computed by Algorithm \ref{algo:stale}. Then, $\bar S_B = \past(B) \setminus S_B$. If $A \in S_B$ we call $A$ {\it stale}.
\end{definition}

The constant $p$ of Algorithm \ref{algo:stale} is chosen by the protocol designer. Intuitively, given a main chain ending with block $B$ that references another block $A$, we judge $A$ by the distance one needs to backtrack along the main chain to find an ancestor of $A$. If the distance exceeds $p$, $A$ is stale.

We call $P^i(B)$ the {\it $i^{\textit{th}}$ ancestor of $B$} and $B$ is a {\it descendant} of $P^i(B)$.\footnote{Note that ancestors and descendants are defined based on the parent tree and not based on other non-parent references building up the DAG.} By $\lca(B_1, B_2)$ (lowest common ancestor) we denote the block that is an ancestor of $B_1$ and an ancestor of $B_2$, such that none of its children are simultaneously an ancestor of $B_1$ and an ancestor of $B_2$.

For blocks $A$ and $B$, $D(A,B)$ is the distance between $A$ and $B$ in the parent tree, i.e. $D(A,P(A)) = 1$, $D(A,P(P(A))) = 2$, etc.

\begin{algorithm}
\DontPrintSemicolon
\KwIn{a block $B$}
\KwOut{a set $S_B$}

\lIf{$B = \textit{genesis}$}{
\Return $\emptyset$
}

$S \gets S_{P(B)}$ \tcp*[r]{Copy $S_{P(B)}$ for blocks in $\past(P(B))$.}
\For{$A \in \past(B) \setminus \past(P(B))$}{
$X = \lca(A, B)$\;
$\textit{Age} = D(X, B)$ \tcp*[r]{age = distance from $B$ to LCA.}
\If{$\textit{Age} > p$}{
$S = S \cup \{A\}$
\tcp*[r]{$A$ is stale iff age is bigger than $p$}
}
}
\Return $S$\;

\caption{Compute $S_B$.}
\label{algo:stale}
\end{algorithm}

Corollary \ref{cor:staleparent} shows that when the main chain is extended, the stale-ness of previously seen blocks is preserved.

\begin{corollary}\label{cor:staleparent}
If $A \in \past(P(B))$ then $A \in S_B \iff A \in S_{P(B)}$.
\end{corollary}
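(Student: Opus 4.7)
The plan is to argue directly from the structure of Algorithm~\ref{algo:stale}. Inspecting the pseudocode shows that the algorithm produces $S_B$ in two distinct phases: phase~(i) is the initialization $S \gets S_{P(B)}$, which copies the prior stale set verbatim; phase~(ii) is a for-loop that ranges only over the newly revealed blocks $\past(B) \setminus \past(P(B))$ and can only \emph{add} elements to $S$, never remove them. Consequently the final set $S_B$ satisfies $S_{P(B)} \subseteq S_B$ and $S_B \setminus S_{P(B)} \subseteq \past(B) \setminus \past(P(B))$, and these two containments are the only facts I need.

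From this, the corollary follows by a short case analysis, which I would write out as two implications. For the forward direction, if $A \in S_{P(B)}$ then immediately after phase~(i) we have $A \in S$; since the loop never removes elements, $A \in S_B$. For the reverse direction, I would exploit the hypothesis $A \in \past(P(B))$ to conclude that $A \notin \past(B) \setminus \past(P(B))$, so the loop of phase~(ii) never iterates on $A$ and therefore cannot add $A$ to $S$; hence if $A \in S_B$, then $A$ must already have belonged to $S$ at the end of phase~(i), i.e.\ $A \in S_{P(B)}$.

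I do not expect a real obstacle, since the statement is essentially a read-off of the algorithm. The only point that might warrant a sentence of justification is that the recursive reference to $S_{P(B)}$ in phase~(i) is well-founded; this holds by induction on the distance from $B$ to the genesis block along the parent tree, with the base case $B = \textit{genesis}$ returning $\emptyset$. Once that is noted, the two containments above give the corollary immediately.
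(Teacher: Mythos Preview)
Your proposal is correct and follows essentially the same approach as the paper's own proof, which is a one-line observation that line~2 of Algorithm~\ref{algo:stale} initializes $S_B$ to $S_{P(B)}$ while the subsequent for-loop only adds blocks outside $\past(P(B))$. Your version is simply a more detailed unpacking of the same two facts, with an extra (harmless) remark about well-foundedness of the recursion.
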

\begin{proof}
Line 2 in Algorithm \ref{algo:stale} sets $S_B$ as the same as $S_{P(B)}$, while the following FOR loop adds only blocks $A \notin \past(P(B))$.
\qed \end{proof}

Theorem \ref{thm:honestarenonstale} establishes the most important property of stale-ness. The probability that the adversary can successfully make an honest block stale decreases exponentially with $p$, and is negligible.

\begin{theorem}[Honest Blocks are Not Stale]\label{thm:honestarenonstale}
Let $B$ be an honest block mined on round $r$. With probability $1 - e^{-\Omega(p)}$, after round $r + O(p)$ each honest player $H$ adopts a main chain ending with a block $B_H$ such that $B \in \bar S_{B_H}$.
\end{theorem}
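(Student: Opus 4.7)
My plan is to sandwich $B$ on the main chain between two honest anchor blocks supplied by Lemma~\ref{lem:freshblock}, and then use this sandwich to control where $B$ can be added to the stale set. Fix $\Delta = c\,p$ for a small constant $c$ to be chosen. First I apply Lemma~\ref{lem:freshblock} with start round $r-\Delta$ to obtain, with probability $1 - e^{-\Omega(p)}$, an honest block $A_1$ mined in $[r-\Delta, r-1]$ that lies on every honest main chain from round $r$ onward; and then again with start round $r+1$ to obtain, with probability $1 - e^{-\Omega(p)}$, an honest block $A_2$ mined in $[r+1, r+1+\Delta]$ that lies on every honest main chain from round $r+1+\Delta$ onward.

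Now I verify the sandwich and locate the relevant iteration of Algorithm~\ref{algo:stale}. Since the honest miner of $B$ uses Algorithm~\ref{algo:ghost} on its own view to set $P(B)$ (Definition~\ref{def:parent}) and that view has $A_1$ on the main chain at round $r$, $A_1$ is a parent-tree ancestor of $P(B)$, and hence of $B$. In particular $A_1$ precedes $B$, so $B \notin \past(A_1)$. Since $A_2$ is honest and mined after $B$ has been delivered to every honest miner, the protocol forces $B \in \past(A_2)$. Fix any honest player $H$ at some round $\geq r+1+\Delta$, let $B_H$ end its main chain, and let $B''$ be the unique ancestor of $B_H$ in the parent tree with $B \in \past(B'') \setminus \past(P(B''))$; this is the sole iteration of Algorithm~\ref{algo:stale} that can ever label $B$ as stale, and it does so exactly when $D(\lca(B, B''), B'') > p$. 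Because $B \in \past(A_2)$, $B''$ must be an ancestor of $A_2$ on the main chain; because $B \notin \past(A_1)$, $B''$ cannot equal $A_1$ or any ancestor of $A_1$, so $B''$ is a strict descendant of $A_1$ on the main chain.

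Let $X := \lca(B, B'')$. Then $X$ is a main-chain block (any parent-tree ancestor of $B''$ is) and an ancestor of $B$; $A_1$ is likewise a main-chain ancestor of $B$, and a short LCA argument --- any common ancestor of $B$ and $B''$ in the parent tree must be an ancestor of $X$, and $A_1$ is such a common ancestor --- shows $A_1$ is itself an ancestor of $X$. Hence $X$ lies between $A_1$ and $B''$ on the main chain, which gives
\[
D(X, B'') \ \leq \ D(A_1, B'') \ \leq \ D(A_1, A_2).
\]
The last distance is bounded by the total number of blocks mined in the round window $[r-\Delta,\, r+1+\Delta]$, which has length $O(p)$. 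Since the expected block rate is $\alpha+\beta < 1$ per round, a Chernoff bound caps that count at $p$ with probability $1 - e^{-\Omega(p)}$ once the constant $c$ is chosen small enough. A union bound over the three high-probability events yields $D(X, B'') \leq p$, so the staleness check in Algorithm~\ref{algo:stale} fails and $B \in \bar S_{B_H}$. The main obstacle in this plan is the structural step pinning $B''$ and $X$ to a short main-chain segment between $A_1$ and $A_2$; once that is in place, the two applications of Lemma~\ref{lem:freshblock} and the single Chernoff bound are routine.
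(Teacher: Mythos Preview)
Your proposal is correct and follows essentially the same approach as the paper: two applications of the Fresh Block Lemma to sandwich $B$ between honest main-chain anchors (the paper's $C$ and $D$ are your $A_1$ and $A_2$), followed by a Chernoff bound on the number of blocks mined in the resulting $O(p)$-round window to cap the parent-tree distance between the anchors below $p$. Your write-up is in fact slightly more explicit than the paper's in locating the block $B''$ at which Algorithm~\ref{algo:stale} actually performs the staleness test for $B$, a step the paper leaves implicit.
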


The proof is deferred to the appendix.

\subsection{Discussion of Flat Rewards}

Consider coupling the presented protocol with a reward mechanism $\cali R^0$ that, intuitively speaking, grants some flat amount $b$ of reward to all non-stale blocks, and $0$ reward to stale blocks. $\cali R^0$ is a special case of the reward scheme properly defined in Definition \ref{def:rewards}.

\begin{corollary}
Under the reward scheme $\cali R^0$, honest players are rewarded proportionally to the number of blocks they mine, except with negligible probability.
\end{corollary}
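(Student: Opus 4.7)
The plan is to reduce the claim directly to Theorem \ref{thm:honestarenonstale} via a union bound. Under $\cali R^0$ the total reward collected by an honest player is by definition $b$ times the number of their non-stale blocks, so it suffices to argue that, except with negligible probability, every block mined by an honest player is non-stale in the main chain eventually adopted by every honest observer.

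First I would fix a polynomial-length execution horizon of $T$ rounds. Since Assumption~2 gives $\alpha + \beta < 1$, the expected (and, by a standard Chernoff argument, w.h.p.\ the actual) number of honest blocks mined during these $T$ rounds is at most $O(T)$. For each such honest block $B$, mined in some round $r \le T$, Theorem \ref{thm:honestarenonstale} guarantees that after round $r + O(p)$ every honest player $H$ adopts a main chain ending in some $B_H$ with $B \in \bar S_{B_H}$, except with probability $e^{-\Omega(p)}$. Corollary~\ref{cor:staleparent} ensures this non-staleness is preserved as the main chain is further extended, so the status is stable.

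Next, I would take a union bound over all at most $O(T)$ honest blocks, bounding the probability that \emph{any} honest block is flagged stale by $O(T) \cdot e^{-\Omega(p)}$, which remains of the form $e^{-\Omega(p)}$ provided $p$ is chosen large enough relative to $\log T$. Conditioning on this event, each honest player receives exactly $b$ per block they mined, giving rewards strictly proportional to their block count, as claimed.

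The main obstacle here is only cosmetic: matching the quantifiers of ``negligible probability'' against the number of blocks over which the union bound is taken. This is handled either by restricting attention to a polynomially long execution (which is the standard convention when invoking Lemma~\ref{lem:freshblock} and Theorem~\ref{thm:honestarenonstale}) or by letting the protocol designer pick $p$ as a sufficiently large function of the horizon so that $T \cdot e^{-\Omega(p)}$ remains negligible. No additional combinatorial argument about block distribution among honest players is needed, since $\cali R^0$ awards each non-stale block independently.
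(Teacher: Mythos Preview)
Your proposal is correct and follows the same approach as the paper: invoke Theorem~\ref{thm:honestarenonstale} to conclude honest blocks are non-stale, hence each earns exactly $b$. The paper's proof is two sentences and does not spell out the union bound over blocks; your version makes that explicit, which is fine. The only thing the paper adds that you omit is a one-line remark that the adversary can only lose rewards (by producing stale blocks) and is otherwise rewarded the same way---this rounds out the ``proportionality'' reading but is not strictly required for the statement as phrased about honest players.
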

\begin{proof}
By Theorem \ref{thm:honestarenonstale} honest blocks are not stale, so honest miners receive rewards linear in the number of blocks they mined. The adversary might only decrease its rewards by producing stale blocks, otherwise the adversary is rewarded in the same way.
\qed \end{proof}

Note that $\cali R^0$ achieves the same fairness guarantee as the Fruitchains protocol to be discussed in Section \ref{sec:fruitchain} --- honest blocks are incorporated into the blockchain as non-stale, while withholding a block for too long makes it lose its reward potential. Both protocols rely on the honest majority of participants to guarantee this fairness.

The Fruitchains protocol relies critically on merged-mining \cite{mergepost} (also called 2-for-1 POW \cite{garay2015bitcoin}) fruits and blocks. While fruits are mined for the rewards, blocks are supposed to be mined entirely voluntarily with negligible extra cost. The reward scheme $\cali R^0$ avoids this complication.

Granting flat amount of reward for each non-stale block leaves a lot of room for deviation that goes unpunished. In the case of the Fruitchains protocol, mining blocks does not contribute rewards in any way. Hence, any deviation with respect to mining blocks (which decide the order of contents) is free of any cost for the adversary. In the context of cryptocurrency transactions, a rational adversary should always attempt to double-spend.

In the case of $\cali R^0$, the adversary can refrain from referencing some recent blocks, and suffer no penalty. However, attempting to manipulate the order of older blocks would render the adversary's new block stale, and hence penalize. Thus, we view even the base case $\cali R^0$ of the presented reward scheme as a strict improvement over the Fruitchains protocol.

\subsection{Penalizing Deviations}

Central to our design is the approach to treating forks i.e. blocks that ``compete'' by referencing the same parent block and not each other. Typically, blockchain schemes specify that one of the blocks eventually ``loses'' and the creator misses out on some rewards, essentially discouraging competition. However, there are ways of manipulating this process to one's advantage, and the uncertainty of which block will win the competition introduces unneeded incentives. We penalize all parties involved in creating a fork.

The {\it conflict set} introduced in Definition \ref{def:conflictset} contains the blocks that ``compete" with a given block. Stale blocks are excluded, as we ignore them for the purpose of computing rewards. Like stale-ness, the conflict set is defined with respect to some other block $A$. Again, we are only interested in blocks $A$ that form the main chain, and the conflict set indicated by the eventual main chain.

The conflict set of a non-stale block $B$ contains all non-stale blocks $X$ that are not reachable by references from $B$, and $B$ is not reachable by references from $X$.

\begin{definition}[Conflict Set]\label{def:conflictset}
For blocks $A$ and $B$ where $B \in \bar S_A$,
$$X_A(B) = \{X : X \in \bar S_A \land X \notin \past(B) \land B \notin \past(X)\}.$$
\end{definition}

\tikzstyle{block} = [draw=black, fill=gray!5, thick, minimum size=7mm,
    rectangle, rounded corners, inner sep=3pt, inner ysep=2pt, text centered]
\tikzstyle{block_blue} = [draw=black, fill=blue!20, thick, minimum size=7mm,
    rectangle, rounded corners, inner sep=3pt, inner ysep=2pt, text centered]
\tikzstyle{block_gray} = [draw=black, fill=gray!80, thick, minimum size=7mm,
    rectangle, rounded corners, inner sep=3pt, inner ysep=2pt, text centered]

\begin{figure}
\centering
\begin{tikzpicture}

\node[block](A){};

\node[block,
right of=A,
yshift=0cm,
xshift=.5cm,
anchor=center
](B){};
\draw (B.west) edge[-latex, semithick] (A.east);

\node[block,
above of=B,
yshift=.3cm,
xshift=0cm,
anchor=center
](BUP){};
\draw (BUP.west) edge[-latex, semithick] (A.east);

\node[block_gray,
below of=B,
yshift=-.3cm,
xshift=0cm,
anchor=center
](BDOWN){};
\draw (BDOWN.west) edge[-latex, semithick] (A.east);

\node[block_blue,
right of=B,
yshift=0cm,
xshift=.5cm,
anchor=center
](C){};

\draw (C.west) edge[-latex, semithick] (B.east);
\draw (C.west) edge[-latex, semithick, dashed] (BUP.east);

\node[block_gray,
above of=C,
yshift=.3cm,
xshift=0cm,
anchor=center
](CUP){};
\draw (CUP.west) edge[-latex, semithick] (B.east);

\node[block_gray,
below of=C,
yshift=-.3cm,
xshift=0cm,
anchor=center
](CDOWN){};
\draw (CDOWN.west) edge[-latex, semithick] (BDOWN.east);

\node[block,
right of=CDOWN,
yshift=0cm,
xshift=.5cm,
anchor=center
](DDOWN){};
\draw (DDOWN.west) edge[-latex, semithick] (CDOWN.east);
\draw (DDOWN.west) edge[-latex, semithick, dashed] (C.east);

\node[block,
right of=C,
yshift=0cm,
xshift=.5cm,
anchor=center
](D){};
\draw (D.west) edge[-latex, semithick] (C.east);
\draw (D.west) edge[-latex, semithick, dashed] (CUP.east);

\end{tikzpicture}

\caption{An example of a conflict set. The gray blocks constitute the conflict set of the blue block. The dashed arrows are references and the solid arrows are parent references.}
\label{fig:ex2}
\end{figure}

Intuitively, the scheme we propose awards every block some amount of reward $b$ decreased by a penalty $c$ multiplied by the size of the conflict set. The ultimate purpose of the properties we establish is to make sure that rational miners want to minimize the conflict set of the blocks they create, following the protocol as a consequence.

\begin{definition}[Rewards]\label{def:rewards}
A reward scheme $\cali R^{c,b}$ is such that given the main chain ending with a block $A$, each block $B \in \past(A)$ is granted $\cali R_A^{c,b}(B)$ amount of reward:
\begin{equation*}
\cali R_A^{c,b}(B) =
\begin{cases}
0, & \text{if $B \in S_A$ or $D(A,\lca(A,B)) \le 2p$}.\\
b - c | X_A(B) |, & \text{otherwise}.
\end{cases}
\end{equation*}
\end{definition}

We write $\cali R^c$ for $\cali R^{c,b}$ if $b$ is clear from context, or just $\cali R$ if $c$ is clear from context.

In our reward scheme, the reward associated with a given block are decreased linearly with the size of the block's conflict set. We need to ensure that no block reward is negative, otherwise the reward scheme would break down. Lemma \ref{lem:linearconflictset} shows that it is only possible for the conflict set to reach certain size; the probability that the conflict set of a block is bigger than linear in $p$ is negligible. Intuitively, it is because stale blocks cannot be part of a conflict set, and after enough time has passed from broadcasting some block $B$, new blocks either reference $B$ or are stale.

As a consequence, we establish in Corollary \ref{cor:rnonnegative} that the rewards are non-negative.

\begin{lemma}\label{lem:linearconflictset}
Let $x \ge p$ and $B$ be a block. The probability that any honest player adopts a main chain ending with a block $A$ such that $|X_A(B)| > xp$ is $e^{-\Omega(x)}$.
\end{lemma}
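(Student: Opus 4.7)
The plan is to treat $A_{k(B)}$—the oldest main-chain ancestor of $A$ that references $B$—as an anchor that forces every $X \in X_A(B)$ to be confined to a short main-chain window around it, and then to finish with Chernoff bounds on main-chain growth and on total block production.

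I first derive the main-chain confinement. For any $X \in X_A(B)$, the relation $B \in \past(A_{k(B)})$ combined with $B \notin \past(X)$ forces $A_{k(B)} \notin \past(X)$. Let $m(X)$ denote the main-chain index at which $X$'s parent chain attaches to the main chain of $A$, and $k(X)$ the largest index with $X \in \past(A_{k(X)})$. Since the parent-chain suffix $\{A_{m(X)}, A_{m(X)+1}, \dots\}$ of $X$ is contained in $\past(X)$, the constraint $A_{k(B)} \notin \past(X)$ yields $m(X) > k(B)$, and the non-stale inequality $m(X) \le k(X) + p$ then gives $k(X) \ge k(B) - p + 1$. Symmetrically, $X \notin \past(B) \supseteq \past(A_{m(B)})$ together with the non-stale inequality $m(B) \le k(B) + p$ forces $k(X) < m(B) \le k(B) + p$. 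Hence both $k(X)$ and $m(X)$ lie in a main-chain window of width $O(p)$ around $k(B)$, and $X$ is mined strictly between $A_{m(X)}$ and $A_{k(X)}$, so within a time span covering at most $3p$ consecutive main-chain positions.

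Next, I would convert this main-chain window into a round count. By standard main-chain-growth concentration—applying Lemma~\ref{lem:freshblock} at $\Theta(x)$ evenly spaced rounds inside the window—the $3p$ main-chain positions span at most $3p/\alpha' + O(x)$ rounds with probability $1 - e^{-\Omega(x)}$; pushing the slack into the additive $O(x)$ term, rather than relying on multiplicative deviation, is what recovers the $e^{-\Omega(x)}$ tail instead of the naive $e^{-\Omega(p)}$. A Chernoff bound on the Binomial count of blocks mined in a window of that length (at rate $\alpha + \beta < 1$) then gives that the number of blocks exceeds $xp$ with probability only $e^{-\Omega(xp)} \le e^{-\Omega(x)}$, since for $x \ge p \ge 1$ the expected count $O(p + x) = O(x)$ is well below the threshold $xp$. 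A union bound combines the two failure events into the claimed $e^{-\Omega(x)}$ bound.

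The hard part is precisely this main-chain-growth concentration with an $e^{-\Omega(x)}$ tail: a direct Chernoff on growth over the expected $O(p/\alpha')$ rounds yields only $e^{-\Omega(p)}$, which is too weak when $x$ is significantly larger than $p$. Handling this cleanly requires inserting $\Theta(x)$ rounds of slack and repeatedly invoking Lemma~\ref{lem:freshblock} at intermediate rounds to preclude long stretches without honest main-chain progress, while keeping the cumulative failure probability of all applications within $e^{-\Omega(x)}$.
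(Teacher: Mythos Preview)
Your confinement argument in main-chain positions is correct and, in fact, cleaner than the paper's: deriving $k(B)-p+1 \le k(X) \le k(B)+p-1$ and $m(X)\in(k(B),k(B)+2p)$ directly from the non-stale inequality $m(\cdot)\le k(\cdot)+p$ is a nice structural way to pin every conflict block to a $3p$-wide main-chain segment.

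The gap is in the conversion from main-chain positions to rounds. You claim the $3p$ positions span at most $3p/\alpha'+O(x)$ rounds by ``applying Lemma~\ref{lem:freshblock} at $\Theta(x)$ evenly spaced rounds,'' but the Fresh Block Lemma only certifies \emph{one} honest main-chain block per $\Delta$-round window with failure $e^{-\Omega(\Delta)}$. To force $3p$ distinct main-chain blocks into a span you need $\Theta(p)$ applications, and to keep each failure at $e^{-\Omega(x)}$ you need spacing $\Delta=\Theta(x)$; this yields a span of $\Theta(px)$ rounds, not $3p/\alpha'+O(x)$. Your additive-slack bound $3p/\alpha'+O(x)$ would follow from a Chernoff-type chain-growth property (growth $\ge \alpha'T - t$ except with probability $e^{-\Omega(t^2/T)}$), which is standard but is a different tool than Lemma~\ref{lem:freshblock}. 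So as written, the conversion step does not close, and your downstream claim that the expected block count is $O(x)$ is unsupported.

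The paper avoids this conversion entirely by anchoring on the \emph{round} $r$ at which $B$ is announced and placing $3p$ honest main-chain checkpoints $P_1,\dots,P_{2p},F_1,\dots,F_p$ at $x/4$-round intervals before and after $r$ (each via Lemma~\ref{lem:freshblock}, union-bounded over $3p\le 3x$ applications). It then argues directly that any non-stale $Z$ in the conflict set must satisfy $P_{2p}\in\past(Z)$ and be announced before $F_p$ enters the chain, confining $Z$ to a window of $\tfrac{3xp}{4}+O(1)$ rounds. The final Chernoff on a $\Theta(px)$-round window with mean $(\alpha+\beta)\cdot\tfrac{3xp}{4}<xp$ then gives $e^{-\Omega(xp)}\le e^{-\Omega(x)}$. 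Your approach is salvageable along the same lines: accept the $\Theta(px)$-round span that the Fresh Block Lemma actually delivers (via $\Theta(p)$ applications at spacing $\Theta(x)$), and note that the final Chernoff still works since the mean $(\alpha+\beta)\cdot\Theta(px)$ is a constant factor below $xp$.
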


The proof is deferred to the appendix.

\begin{corollary}[Rewards Are Non-Negative]\label{cor:rnonnegative}
Let $B$ be a block. The probability that any honest player adopts a main chain ending with a block $A$ such that $\cali R_A^{c,b}(B) < 0$ is $e^{-\Omega(\frac{b}{cp})}$.
\end{corollary}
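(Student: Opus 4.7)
The plan is to reduce the corollary directly to Lemma~\ref{lem:linearconflictset}. First I would unpack when a negative reward is possible at all: by Definition~\ref{def:rewards}, $\cali R_A^{c,b}(B)$ equals $0$ whenever $B \in S_A$ or $D(A,\lca(A,B)) \le 2p$, so the only way to get $\cali R_A^{c,b}(B) < 0$ is to land in the second case of the definition, where the reward equals $b - c|X_A(B)|$. Hence $\cali R_A^{c,b}(B) < 0$ is equivalent to the event $|X_A(B)| > b/c$, and the corollary reduces to bounding the probability of this event.

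Next I would calibrate Lemma~\ref{lem:linearconflictset} to exactly this threshold. The lemma says that for any $x \ge p$, the probability that some honest player adopts a main chain ending at $A$ with $|X_A(B)| > xp$ is $e^{-\Omega(x)}$. Setting $xp = b/c$, i.e.\ $x = b/(cp)$, gives the desired event on the left and the claimed bound $e^{-\Omega(b/(cp))}$ on the right.

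The one subtlety to handle is the hypothesis $x \ge p$ of Lemma~\ref{lem:linearconflictset}. In the regime $b/(cp) \ge p$ (that is, $b \ge cp^2$) the lemma applies directly and yields the corollary. In the complementary regime $b < cp^2$, the bound $e^{-\Omega(b/(cp))}$ is only $e^{-\Omega(1)}$, i.e.\ a constant; the inequality still holds vacuously because probabilities are at most one, so the conclusion is trivially true in that regime. I would include a single sentence noting this case distinction.

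I expect no real obstacle: the corollary is essentially a one-line substitution into Lemma~\ref{lem:linearconflictset}. The only thing requiring care is making the case split on $b$ versus $cp^2$ explicit so the reader sees why the stated exponent $b/(cp)$ is meaningful precisely when $b$ is chosen large enough relative to $cp^2$ (which is how the protocol designer would pick $b$ in practice to guarantee non-negative rewards with high probability).
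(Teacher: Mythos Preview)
Your proposal is correct and matches the paper's own approach exactly: the paper simply states that the corollary follows directly from Lemma~\ref{lem:linearconflictset}, and your write-up supplies precisely the substitution $x = b/(cp)$ that makes this explicit. Your added case distinction on $b/(cp) \ge p$ versus $b/(cp) < p$ is a helpful clarification that the paper omits.
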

\begin{proof}
Follows directly from Lemma \ref{lem:linearconflictset}.
\qed \end{proof}

The conflict set of a block is determined based on the main chain. At some point, the reward needs to be determined and stay fixed. Lemma \ref{lem:stableconflictset} shows that if the main chain has grown far enough from block $B$, the new block $A$ appended to the chain will not modify the conflict set of $B$.

\begin{lemma}\label{lem:stableconflictset}
If $D(P(A),\lca(P(A),B)) > 2p$ then $X_A(B) = X_{P(A)}(B)$
\end{lemma}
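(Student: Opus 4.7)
The plan is to prove both inclusions of the claimed equality. The inclusion $X_{P(A)}(B) \subseteq X_A(B)$ is immediate: any $X \in X_{P(A)}(B)$ lies in $\past(P(A)) \subseteq \past(A)$, and Corollary~\ref{cor:staleparent} lifts $X \in \bar S_{P(A)}$ to $X \in \bar S_A$, while the $\past$-incomparability conditions on $B$ do not involve $A$. For the reverse direction, I would again split on whether $X \in \past(P(A))$ (in which case the same corollary handles it and $X \in X_{P(A)}(B)$) or whether $X$ is a ``new'' block, i.e.\ $X \in \past(A) \setminus \past(P(A))$. The entire content of the lemma then reduces to showing that no new non-stale block can be incomparable with $B$.

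For this main argument, write $A_i := P^i(A)$ and collect two consequences of the definitions. First, $X$ being new and non-stale forces $\lca(X, A) = A_l$ with $l \leq p$ by Algorithm~\ref{algo:stale}, whence $X$'s parent-chain extends $X \to \cdots \to A_l \to A_{l+1} \to \cdots$ and contains every $A_j$ with $j \geq l$; in particular all such $A_j$ lie in $\past(X)$. Second, writing $W := \lca(P(A), B) = A_m$, the hypothesis gives $m \geq 2p + 2$, hence $m > l$. If $B$ itself sits on $A$'s parent-chain, then $B = W = A_m$ already lies in $X$'s parent-ancestry, contradicting $B \notin \past(X)$.

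The main obstacle is the other case, when $B$ hangs off a branch that rejoins the main chain at $A_m$. Here I would set $k^* := \max\{k : B \in \past(A_k)\}$, so that $B$ is ``new'' for $A_{k^*}$ in the sense of Algorithm~\ref{algo:stale}. Iterated Corollary~\ref{cor:staleparent} preserves $B \in \bar S_A$ down to $B \in \bar S_{A_{k^*}}$, so the algorithm's age test gives $D(\lca(B, A_{k^*}), A_{k^*}) \leq p$. A short check, using that $A_{k^*}$ references $B$ (so $k^* < m$) and that $B$'s branch meets $A$'s parent-chain only at $A_m$, identifies $\lca(B, A_{k^*}) = A_m$; the age bound then yields $k^* \geq m - p \geq p + 2 > l$. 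Hence $A_{k^*}$ lies in $X$'s parent-chain, giving $B \in \past(A_{k^*}) \subseteq \past(X)$ and contradicting $B \notin \past(X)$. The crux of the proof is this final age computation that pins down where $B$ first enters the main chain; everything else is bookkeeping.
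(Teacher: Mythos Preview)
Your proof is correct and follows essentially the same route as the paper's: both directions of the inclusion are handled the same way, and for the hard direction both arguments hinge on the fact that $B\in\bar S_A$ together with the hypothesis forces $B\in\past(P^i(A))$ for some $i>p$, which is then played off against the age bound $l\le p$ for the new non-stale block $X$. The paper's proof states this key fact in one line (``$B\in\bar S_A$, therefore $B\in\past(P^i(A))$'') and finishes via the contrapositive (concluding $Y\in S_A$), whereas you spell out the computation of $k^*$ explicitly and conclude directly that $B\in\past(X)$; these are the same argument read in opposite directions, with your version supplying the details the paper omits.
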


The proof is deferred to the appendix.

The rewards in Definition \ref{def:rewards} are only assigned as non-zero to blocks $B$ such that $D(A,\lca(A,B)) > 2p$, where $A$ is the block at the end of the main chain. By Corollary \ref{cor:rfinal}, these non-zero rewards are not modified by the blocks extending the main chain and remain fixed.

\begin{corollary}[Rewards Are Final]\label{cor:rfinal}
$$\forall B \in \past(A): \cali R_{P(A)}(B) \ne 0 \implies \cali R_A(B) = \cali R_{P(A)}(B).$$
\end{corollary}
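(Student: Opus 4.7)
The plan is to reduce the corollary to a direct assembly of three prior results applied clause by clause to the reward formula in Definition \ref{def:rewards}. Assume $\cali R_{P(A)}(B) \ne 0$. Then by Definition \ref{def:rewards} (applied at $P(A)$), the block $B$ must lie in $\past(P(A))$, it must satisfy $B \notin S_{P(A)}$ and $D(P(A),\lca(P(A),B)) > 2p$, and the reward is given by $\cali R_{P(A)}(B) = b - c|X_{P(A)}(B)|$. To conclude $\cali R_A(B) = \cali R_{P(A)}(B)$, I would show that each of the three conditions determining the value of $\cali R_A(B)$ reduces to the corresponding condition for $P(A)$.

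First I would check that $B$ is still non-stale at $A$: since $B \in \past(P(A))$, Corollary \ref{cor:staleparent} gives $B \in S_A \iff B \in S_{P(A)}$, and the right side is false. Next I would verify the age condition at $A$. Because $B \in \past(P(A))$, the parent-tree path from $B$ to $A$ passes through $P(A)$, so $\lca(A,B) = \lca(P(A),B)$, and hence
\[
D(A,\lca(A,B)) = D(P(A),\lca(P(A),B)) + 1 > 2p + 1 > 2p.
\]
Thus both "zero-case" clauses of Definition \ref{def:rewards} fail at $A$, so $\cali R_A(B) = b - c|X_A(B)|$. Finally, the hypothesis $D(P(A),\lca(P(A),B)) > 2p$ is precisely the premise of Lemma \ref{lem:stableconflictset}, which yields $X_A(B) = X_{P(A)}(B)$, so the two "otherwise" values coincide.

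There is essentially no substantive obstacle here; the corollary is a bookkeeping consequence of Corollary \ref{cor:staleparent} and Lemma \ref{lem:stableconflictset}. The only point requiring a bit of care is the LCA identity $\lca(A,B) = \lca(P(A),B)$, which I would justify by noting that every ancestor of $B$ in the parent tree lies in $\past(P(A))$ (since $B \in \past(P(A))$ and $\past(\cdot)$ is closed under taking parents), so the LCA of $A$ and $B$ must be reached from $A$ by first stepping to $P(A)$. Once this identity and the two cited results are in place, the proof is a one-line substitution.
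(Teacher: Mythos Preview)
Your proof is correct and follows the same route as the paper's: invoke Corollary~\ref{cor:staleparent} for non-staleness and Lemma~\ref{lem:stableconflictset} for the conflict set, after checking that the age condition at $A$ follows from that at $P(A)$. The paper's proof is terser and mentions ``induction,'' which really concerns propagating the one-step result along the main chain rather than the stated implication itself; your explicit handling of the identity $\lca(A,B)=\lca(P(A),B)$ fills in exactly the detail the paper leaves implicit.
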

\begin{proof}
$\cali R_A^{c,b}(B)$ is non-zero only if $D(A,\lca(A,B)) > 2p$. The corollary follows from Lemmas \ref{cor:staleparent} and \ref{lem:stableconflictset} and induction.
\qed \end{proof}

The properties we have established so far culminate in Theorem \ref{thm:main}.

\begin{theorem}\label{thm:main}
Deviating from the protocol reduces the adversary's rewards and its proportion of rewards $\cali R^{c,b}$, except with negligible probability.
\end{theorem}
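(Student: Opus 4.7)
The plan is to compare the adversary's expected reward stream under the honest strategy against the stream under any deviating strategy and show the latter is strictly smaller both in absolute terms and as a share of total reward. As a baseline, if the adversary follows the protocol then by Theorem \ref{thm:honestarenonstale} no adversarial block becomes stale, and since every adversarial block references all unreferenced blocks the adversary has seen, no conflict is ever introduced intentionally. By Lemma \ref{lem:linearconflictset} the conflict sets arising from natural honest–adversary races are negligibly small, so each adversarial block yields a reward arbitrarily close to $b$, giving an expected reward of $\beta b$ per round and an asymptotic share of $\beta/(\alpha' + \beta)$.

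The next step is to classify any deviation into two atomic kinds of action: a \emph{temporal} deviation withholds an adversarial block $B$ from broadcast for one or more rounds, while a \emph{structural} deviation mines a block $B$ that omits a reference to a block $X$ the adversary has seen. Every deviation from the protocol is a composition of such actions. For a temporal deviation, either $B$ is withheld long enough that Algorithm \ref{algo:stale} flags it as stale (the analysis dualizes Theorem \ref{thm:honestarenonstale}: with probability $1 - e^{-\Omega(p)}$ the honest main chain grows past $p$ ancestors of $B$'s would-be parent before $B$ is released) in which case $B$ forfeits its entire reward $b$; or $B$ is released while still non-stale, in which case every honest block mined in parallel during the withholding window lies in $X_A(B)$ because honest miners had not yet seen $B$. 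A structural deviation directly places $X$ into $X_A(B)$ whenever $X \notin S_A$, which holds for honest $X$ with high probability.

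The cost of each atomic action is then quantified by a symmetry/charging argument. By Definition \ref{def:conflictset} the relation ``$Y \in X_A(Z)$'' is symmetric among non-stale blocks, so every adversary–honest conflict both reduces the adversary block's reward by $c$ and reduces the paired honest block's reward by $c$. Writing $s$ for the expected number of adversarial blocks made stale per round and $k$ for the expected number of adversary–honest conflicts introduced per round, the reward flows become
\begin{equation*}
R_{\text{adv}} = \beta b - s b - k c, \qquad R_{\text{hon}} = \alpha' b - k c,
\end{equation*}
to be compared with $\beta b$ and $\alpha' b$ under honesty. The absolute drop $s b + k c$ is positive for any nontrivial deviation, and the share $(\beta b - s b - k c)/((\alpha'+\beta) b - s b - 2 k c)$ is strictly below $\beta/(\alpha'+\beta)$, since the cross-multiplied difference reduces to a positive multiple of $s b \alpha' + k c (\alpha' - \beta)$, which is positive by the honest-majority assumption $\alpha' \ge \beta(1+\epsilon)$.

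The hardest step will be handling the full strategy space uniformly: the adversary may adaptively combine withholding, selective delivery to subsets of players, and private subchains, and may attempt to shift the main chain to disadvantage particular honest blocks. The leverage needed is Corollary \ref{cor:rfinal} together with Lemma \ref{lem:freshblock}: once an honest block has been buried $2p$ deep in the main chain, both its conflict set and its stale-status are frozen, so the adversary cannot retroactively revoke honest rewards by releasing a private subchain later. Consequently every adaptive trajectory must be charged at the time each deviating block is created, either as a stale adversarial block or as a conflict-inducing adversarial block, feeding into the computation above. Combining this with the negligible failure probabilities of Theorem \ref{thm:honestarenonstale}, Lemma \ref{lem:allincluded}, and Lemma \ref{lem:linearconflictset} closes the theorem.
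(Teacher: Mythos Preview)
Your proposal is essentially the same argument as the paper's, built on the same three pillars: the symmetry of the conflict relation $Y \in X_A(Z) \iff Z \in X_A(Y)$, the dichotomy between withholding (temporal) and omitting references (structural), and the honest-majority inequality to pass from ``equal absolute loss'' to ``decreased share''. The paper's proof is terser and purely qualitative, while you carry out the explicit share computation with parameters $s$ and $k$; your cross-multiplied difference $sb\alpha' + kc(\alpha'-\beta)$ is the precise form of what the paper asserts in words.

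Two minor imprecisions worth cleaning up. First, you use $\alpha'$ for the honest block rate in $R_{\text{hon}} = \alpha' b - kc$, but $\alpha'$ is the probability that \emph{exactly one} honest block is mined in a round, not the expected number of honest blocks; the expected count is $\alpha$. Since $\alpha > \alpha' > \beta$ the sign of your difference is unaffected, but the formula as written undercounts honest reward. Second, your baseline claim that under honest play ``each adversarial block yields a reward arbitrarily close to $b$'' overstates Lemma~\ref{lem:linearconflictset}: that lemma bounds conflict sets by $O(p)$ with high probability, not by $o(1)$; natural same-round races contribute an $O(\alpha+\beta)$ baseline conflict that is small but not negligible. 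This baseline is present in both the honest and deviating scenarios, so it cancels in the comparison and does not affect your conclusion---but it does mean the absolute reward rate is not exactly $\beta b$.
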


The proof is deferred to the appendix.

\subsection{Nash Equilibria}

Theorem \ref{thm:main} follows from Lemma \ref{lem:freshblock} and hence holds for the same action space as considered in \cite{kiayias2017trees}, i.e. attempting to mine any chosen blocks and withholding or releasing blocks at will. Hence, for this action space, minimizing the conflict set of mined blocks is in the interest of the miner. The adversary is considered as a coordinated minority coalition of players, hence the constants $p,c,b$ can be set such that all players following the protocol constitute a strict, strong Nash equilibrium. In other words, all agents and all minority coalitions of agents strictly prefer to follow the protocol to any alternative strategy.

\begin{corollary}
All players following the protocol constitute a strict, strong Nash equilibrium.
\end{corollary}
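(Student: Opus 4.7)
The plan is to derive this corollary directly from Theorem \ref{thm:main} by unpacking what a \emph{strict, strong} Nash equilibrium means in the paper's adversarial model. By construction, the adversary is a coordinated minority coalition endowed with the strongest possible action space (choice of block contents, timing of release, and message routing to individual agents). Consequently, any deviation by any coalition of players whose combined hashing power is at most $\beta$ is realized by some adversarial strategy, and a single deviating player is the special case of a singleton coalition. It therefore suffices to show that no adversarial strategy yields a weakly larger expected payoff than following the protocol, both in absolute rewards and as a share of total rewards.

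Second, I would translate the qualifier ``except with negligible probability'' of Theorem \ref{thm:main} into a strict inequality on expected utility, which is what Nash equilibrium requires. Every block receives reward at most $b$, and under the assumption $\alpha + \beta < 1$ the number of blocks mined in any horizon is bounded in expectation, so the adversary's total payoff over any run is almost surely at most linear in the horizon. On the complement of the negligible event, Theorem \ref{thm:main} gives a strict decrease in both the absolute and relative reward of the coalition relative to honest play, and this decrease is lower-bounded by $c$ times the number of conflict-set entries the deviation induces. Choosing $p$ sufficiently large makes the $e^{-\Omega(p)}$ tail event's contribution to the expectation dominated by the positive gap on the main event, while Corollary \ref{cor:rnonnegative} simultaneously keeps rewards non-negative by making $b/(cp)$ large. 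Hence expected coalitional reward strictly decreases under any genuine deviation.

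The main obstacle will be making the expectation argument fully rigorous: bounding the contribution of the negligible-probability event by simultaneously bounding the number of blocks mined and the maximum per-block reward, and exhibiting a uniform positive lower bound on the loss on the main event that does not vanish as one ranges over all possible deviations. A minor subtlety is that ``following the protocol'' itself involves randomness (the proof-of-work lottery), so the comparison is naturally between expectations under shared randomness, which is precisely the form of guarantee provided by Theorem \ref{thm:main}. Strictness in the ``strong'' sense then follows because any nontrivial deviation either produces a stale block, omits a reference to a non-stale block, or withholds a block past the staleness threshold, each of which is penalized with positive probability by the conflict-set term, so the inequality is strict for every deviating coalition.
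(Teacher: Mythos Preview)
Your approach is essentially the same as the paper's: derive the corollary from Theorem \ref{thm:main} by interpreting the adversary as an arbitrary minority coalition, then choose the constants $p,c,b$ so that the ``except with negligible probability'' qualifier is absorbed into a strict inequality. The paper itself does not give a separate proof of this corollary; it simply states, in the paragraph preceding it, that since the adversary models any coordinated minority coalition, ``the constants $p,c,b$ can be set such that all players following the protocol constitute a strict, strong Nash equilibrium,'' and leaves it at that.

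Where you go further than the paper is in trying to make the passage from the high-probability statement of Theorem \ref{thm:main} to a strict expected-payoff inequality explicit: bounding per-block rewards by $b$, bounding the number of blocks via $\alpha+\beta<1$, and arguing that the $e^{-\Omega(p)}$ tail is dominated by the conflict-set loss on the main event. The paper does not carry out this step at all, so your plan is more rigorous than what the paper actually provides. The obstacle you identify --- obtaining a uniform positive lower bound on the loss over all nontrivial deviations --- is real and is precisely what the paper sweeps under the phrase ``can be set such that''; you are right to flag it, but be aware that the paper does not resolve it either.
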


However, there exist other Nash equilibria, such as the scenario described in Example \ref{ex:1} in the appendix. The presented equilibrium is based on a player threatening to induce penalties for other players by suffering penalties herself. Intuitively speaking, we suggest all Nash equilibria where some player does not follow the protocol are of this nature, but we do not formalize this concept. However, if the adversary wishes to spend resources solely to influence the behaviour of rational miners, there are always ways to achieve this outside the scope of any reward scheme, such as bribery (see Section \ref{sec:bribery}).

\subsection{Hurting Other Players}

When designing a reward scheme, it might be seen as fair if each honest player is rewarded irrespectively of the strategies of other players. Such fairness principle is enjoyed by the Fruitchains protocol and our reward scheme $\cali R^0$. However, those schemes inevitably trivialize some aspect of the game and leave potential for deviation that goes unpunished. A relaxation of this principle is stated in Corollary \ref{cor:hurting} based on Theorem \ref{thm:main} and its proof.

\begin{corollary}\label{cor:hurting}
Under the reward scheme $\cali R^{c,b}$, by deviating from the protocol the adversary can only reduce the rewards of other players by forfeiting at least the same amount.
\end{corollary}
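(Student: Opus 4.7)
The plan is to combine Theorem \ref{thm:honestarenonstale} with the symmetry of the conflict-set relation from Definition \ref{def:conflictset}. For non-stale blocks $B$ and $X$ relative to some main chain endpoint $A$, both $X \in X_A(B)$ and $B \in X_A(X)$ reduce to the unordered condition that neither block lies in the other's past, so conflict-set membership is symmetric. By Corollary \ref{cor:rfinal} rewards are final once the main chain extends sufficiently past a block, so I would fix a main chain endpoint $A$ far enough in the future and compare, block by block, the rewards $\cali R_A^{c,b}(\cdot)$ under a given deviation against the baseline in which the adversary follows the protocol.

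Next I would decompose the total reduction in honest players' rewards into three sources: (i) honest blocks that become stale under the deviation, (ii) additional honest-honest entries in conflict sets, and (iii) additional adversarial entries in the conflict sets of honest blocks. Source (i) contributes only negligibly by Theorem \ref{thm:honestarenonstale}. Source (ii) is in fact zero: two honest blocks appear in each other's conflict sets only when neither miner had seen the other's block at mining time, which by the one-round delivery bound on honest broadcasts can happen only when the two blocks are mined in the same round, and the distribution of honest mining events is independent of the adversary. Hence the entire non-negligible harm to honest players comes from source (iii), amounting to $c \cdot N$, where $N$ counts ordered pairs $(B, X)$ with $B$ honest non-stale, $X$ adversarial non-stale, and $X \in X_A(B)$.

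I would then invoke symmetry to conclude. Each pair counted by $N$ contributes $c$ to $X$'s own conflict-set penalty, because $B \in X_A(X)$ and $B$ is itself non-stale. Hence the adversary forfeits at least $c \cdot N$ purely from the symmetric bookkeeping on source (iii), matching the harm inflicted on honest players. Every additional source of adversarial loss --- stale adversarial blocks forfeiting their full $b$, adversary-adversary conflicts each costing $c$ on both sides, and foregone rewards from blocks the adversary chose not to mine relative to baseline --- only widens the inequality, yielding the corollary.

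The main obstacle I anticipate is the careful treatment of source (ii): one must rule out that the adversary, by combining block-mining deviations with its message-ordering power, can surreptitiously inflate the honest-honest conflict count without paying a corresponding cost. Given the model's guarantee that honest-broadcast blocks reach every honest player within one round regardless of adversarial behavior, this reduces to the observation that only honest blocks mined in the same round are mutually in each other's past-free zone, a quantity the adversary cannot affect. Once this step is nailed down, the symmetry argument on source (iii) delivers the desired inequality between the adversary's forfeit and the harm done to others.
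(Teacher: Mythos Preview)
Your proposal is correct and rests on the same key observation the paper uses, namely the symmetry $Y \in X_A(Z) \iff Z \in X_A(Y)$ highlighted in the proof of Theorem~\ref{thm:main}; the paper offers no separate argument for the corollary beyond pointing to that proof. Your explicit three-way decomposition---and in particular your handling of honest--honest conflicts in source~(ii)---spells out carefully what the paper leaves implicit, but the underlying route is the same.
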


We observe that the property stated in Corollary \ref{cor:hurting} prevents the existence of selfish mining strategies such as those concerning Bitcoin and other traditional blockchains (see Section \ref{sec:selfish}). Such strategies pose a threat since they enable forfeiting some rewards to penalize other players to an even bigger extent.

\section{Related Work}\label{sec:related}

The model of round-based communication in the setting of blockchain was introduced in \cite{garay2015bitcoin}. This paper formalizes and studies the security of Bitcoin.

\subsection{Selfish Mining}\label{sec:selfish}

Selfish mining is a branch of research studying a type of strategies increasing the proportion of rewards obtained by players in a Bitcoin-like system. Selfish mining exemplifies concerns stemming from the lack of proven incentive compatibility. Selfish mining was first described formally in \cite{eyal2014selfish}, although the idea had been discussed earlier \cite{selfishpost}. Selfish mining strategies have been improved \cite{sapirshtein2016optimal} and generalized \cite{nayak2015stubborn}. Selfish mining is not applicable to our incentive scheme.

\subsection{DAG}

The way we order all blocks for the purpose of processing them was introduced in \cite{lewenberg2015inclusive}. The authors consider an incentive scheme to accompany this modification. Their design relies on altruism, as referring extra blocks has no benefit, other than to creators of referred blocks. Hence, rational miners would never refer them, possibly degenerating the DAG to a blockchain similar to Bitcoin's. Some other shortcomings are discussed by the authors.

The authors of \cite{li2018scaling} contribute an experimental implementation of the directed acyclic graph structure and ordering of \cite{lewenberg2015inclusive}, in particular its advantages with respect to the throughput.

\subsection{Fruitchains}\label{sec:fruitchain}

Fruitchains \cite{pass2017fruitchains} is the work probably the closest related to ours. Fruitchains is a protocol that gives a guarantee that miners are rewarded somewhat proportionally to their mining power. The objective might seem similar to ours, but there are fundamental differences. To achieve fairness, similarly to existing solutions, the Fruitchains protocol requires the majority of miners to cooperate without an incentive. In other words, in order to contribute to the common good of the system, players must put in altruistic work. In contrast, we strive for a protocol such that any miner simply trying to maximize their share or amount of rewards will inadvertently conform to the protocol.

The Fruitchains protocol rewards mining of ``fruits", which are a kind of blocks that do not contribute to the security of the system. The Fruitchains protocol relies on merged-mining \footnote{One of the first mentions of merged-mining as used today is \cite{mergepost}, although the general idea was mentioned as early as \cite{mergemineidea}.} also called 2-for-1 PoW in \cite{garay2015bitcoin}. In addition to fruits, the miners can mine ``normal" blocks (containing the fruits) with minimal extra effort and for no reward. The functioning and security of the system depends only on mining normal blocks according to the protocol.

Miners are asked to reference the fruits of other miners, benefiting others but not themselves, similarly to \cite{lewenberg2015inclusive}. The probability of not doing so having any effect is negligible, since majority of the miners are still assumed to reference said fruits.

The resulting system-wide cooperation guarantees fairness, inevitably removing many game-theoretic aspects from the resulting game. In particular, misbehaviour does not result in any punishment. It is common to analyze blockchain designs with respect to the expected cost of a double-spend attempt. In the case of Fruitchains, while the probability of double-spends being successful is similar to previous designs, the {\it cost} of attempting to double-spend is nullified. As a result, any miner might attempt to double-spend constantly at no cost, which we view as a serious jeopardy to the system.

In the absence of punishments, we also argue that not conforming to the protocol is often simpler. Since transaction fees are shared between miners, including transactions might be seen as pointless altogether. Mining only fruits with dummy, zero-fee transactions, while not including the fruits of others (or not mining for blocks altogether), would relieve the miner of a vast majority of the network communication.

Another game-theoretic issue of the Fruitchains protocol is that while it prescribes sharing of the transaction fees, miners might ask transaction issuers to disguise the fee as an additional transaction output, locking it to a specific miner, potentially benefiting both parties and disrupting the protocol.

As argued in Section \ref{sec:rewards}, the reward scheme $\cali R^0$ is an improvement over Fruitchains in the same vein, achieving the same result while avoiding some of the complications.

In contrast to Fruitchains protocol, the approach of reward schemes $\cali R^{c,b}$ is to employ purely economic forces, clearly incentivizing desired behaviour while making sure that deviations are punished.

\subsection{Bribery}\label{sec:bribery}

Recently, there have been works highlighting the problems of bribery, e.g. \cite{Bonneau16,judmayer2019pay,McCorryHM18}. A bribing attacker might temporarily convince some otherwise honest players (either using threats or incentives) to join the adversary. Consequently, the adversary might gain more than half of the computational power, taking over the system temporarily.

Such bribery might be completely external to the reward scheme itself, for example the adversary might program a smart contract (perhaps in another blockchain) that provably offers rewards to miners that show they deviate from the protocol \cite{judmayer2019pay}. Hence, no permissionless blockchain can be safe against this type of attack.

\section{Conclusions}

Mining is a risky business, as block rewards must pay for hardware investments, energy and other operation costs. At the time of this writing, the Bitcoin mining turnover alone is worth over \$10 billion per year, which is without a doubt a serious market. Miners in this market are professionals, who will make sure that their investments pay off. Yet, many believe that a majority of miners will follow the protocol altruistically, in the best interests of everybody, the ``greater good".

We argue that assuming altruistic miners is not strong enough to be a foundation for a reliable protocol. In this work, we introduced a blockchain incentive scheme such that following the protocol is guaranteed to be the optimal strategy.

We showed that our design is tolerant to miners acting rationally, trying to get the maximum possible rewards, with no consideration for the overall health of the blockchain.

To the best of our knowledge, our design is the first to provably allow for rational mining.  Nakamoto \cite{nakamoto2008bitcoin} needed  ``honest nodes collectively control more CPU power than any cooperating group of attacker nodes". With our design it is possible to turn the word honest into the word rational.

\bibliographystyle{splncs04}
\bibliography{refs}

\section*{Appendix}

\appendix

\section{Discussion of Block Content and Transaction Fees}\label{sec:blockcontents}

Depending on the use of the blockchain, miners can be rewarded for including contents in their blocks in various ways. Typically, a transaction fee is awarded to only one miner that first includes the transaction in a block. As a result, the order of processing blocks is important for determining who collects the fees, as it indicates which block is the first. Problematic incentives are introduced with respect to manipulating the order.

Any particular fee-sharing scheme cannot be enforced, because the fee might be disguised as a regular transaction output paid to the miner directly. This can benefit both the transaction issuer and the miner, incentivizing the behaviour.\footnote{If we disregard this vulnerability, the same fee-sharing approach as employed by the Fruitchains protocol can be applied to our work.}

To be incentive compatible, it is not necessary that the fees are spread proportionally. What we want is that the miners never have an incentive to omit a reference to another block. As all blocks are assumed to eventually be included in the blockchain, it is enough to ensure that sufficiently small changes of the linearized order of the blocks have no effect on the miner rewards. This can be achieved by allowing multiple blocks to claim the same inclusion of contents, and having the fee be shared among the including blocks equally. 

In other words, any player who wishes to include a transaction can do so within a certain window, without an effect on their incentives to reference other blocks. Crucially, sending the fee directly to a miner as a transaction output removes the incentive for other miners to include the transaction, as well as the incentive to manipulate the place of the including block in the order.

The point of such a change would be to separate transaction inclusion from referencing blocks. Transaction inclusion is a complex game in itself, similar to the game studied by \cite{lewenberg2015inclusive}.

\section{Proofs}

\paragraph{\bf Proof of Theorem \ref{thm:honestarenonstale}.}
{\itshape
Let $B$ be an honest block mined on round $r$. With probability $1 - e^{-\Omega(p)}$, after round $r + O(p)$ each honest player $H$ adopts a main chain ending with a block $B_H$ such that $B \in \bar S_{B_H}$.
}
\begin{proof}
Let $\Delta = \floor{\frac{p}{2 (\alpha+\beta) (1 + \epsilon)} - \frac{1}{2}} = O(p)$. By Lemma \ref{lem:freshblock}, with probability $1 - e^{-\Omega(\Delta)}$, on and after round $r$, honest players have adopted main chains containing a block $C$ mined between rounds $r - \Delta$ and $r$ (or the genesis block if $r - \Delta < 1$). Hence $C$ is an ancestor of $B$. By Lemma \ref{lem:freshblock}, let $D$ be the honest block mined between rounds $r+1$ and $r + \Delta + 1$ that honest players adopted in the main chain on and after round $r + \Delta + 1$, again with probability $1 - e^{-\Omega(\Delta)}$. $D$ is honest and mined after round $r$, so $B \in \past(D)$. 

Since $C$ was mined on or after round $r - \Delta$, and $D$ was mined on or before round $r + \Delta + 1$, $D(C,D)$ is at most the number $Y$ of blocks mined between rounds $r - \Delta$ and $r + \Delta + 1$. By the Chernoff bound:
$$
e^{-\frac{\epsilon^2 (2(\alpha+\beta)\Delta)}{3}} \geq \Pr[Y \geq (1 + \epsilon)(\alpha+\beta)(2\Delta + 1)] \geq \Pr[Y \geq p]
$$

Since $C$ is an ancestor of $D$, $C$ is an ancestor of $\lca(B,D)$, and $D(C,D) \geq D(\lca(B,D), D)$. By Algorithm \ref{algo:stale}:
$$
D(C,D) < p \implies B \in \bar S_D.
$$

By union bound, the probability that such $C$ and $D$ exist and that $B \in \bar S_D$ is at least equal
$$
1 - 2e^{-\Omega(\Delta)} - e^{-\frac{\epsilon^2 (2(\alpha+\beta)\Delta)}{3}} = 1 - e^{-\Omega(p)}.
$$
By Corollary \ref{cor:staleparent} and induction, with probability $1 - e^{-\Omega(p)}$, after round $r+\Delta$ all honest players adopt only chains ending with blocks $X$ such that $B \in \bar S_X$.
\qed \end{proof}

\paragraph{\bf Proof of Lemma \ref{lem:linearconflictset}}
{\itshape
Let $x \ge p$ and $B$ be a block. The probability that any honest player adopts a main chain ending with a block $A$ such that $|X_A(B)| > xp$ is $e^{-\Omega(x)}$.
}

\begin{proof}
Let $r$ be the round $B$ was announced. Let $P_i$, $i \in \{1, \dots, 2p\}$ (respectively $F_i$, $i \in \{1, \dots, p\}$), be an honest block mined between rounds $r - \frac{xi}{4} - 1$ and $r - \frac{x(i-1)}{4} - 1$ (resp. $r + \frac{x(i-1)}{4} + 1$ and $r + \frac{xi}{4} + 1$) contained in the main chain of every honest player on and after round $r + \frac{xp}{4} + 1$; by Lemma \ref{lem:freshblock} and union bound such blocks exist with probability $1 - e^{-\Omega(x)}$.

Since $F_1$ is honest, $B \in \past(F_1)$. By Algorithm \ref{algo:stale}, if $P_p \notin \past(B)$, then $B \in S_{F_1}$ and $X_A(B)$ remains undefined for honest players. Otherwise, assume $P_p \in \past(B)$.

Let $Z$ be a block such that $Z \notin \past(B) \land B \notin \past(Z)$. Since $Z \notin \past(B)$, $Z \notin \past(P_p)$. By Algorithm \ref{algo:stale}, either $P_{2p} \in \past(Z)$, or $Z$ becomes stale in the main chains of honest players from round $r - \frac{x(p-1)}{4} - 1$ on. Assume $P_{2p} \in \past(Z)$, and hence $Z$ is mined on or after round $r - \frac{2xp}{4} - 1$.

Since $B \notin \past(Z)$, $F_1 \notin \past(Z)$. Then, either $Z$ is announced before round $r + \frac{xp}{4} + 1$, or by Algorithm \ref{algo:stale}, $Z$ becomes stale in the main chains of honest players afterwards. Assume $Z$ is announced before round $r + \frac{xp}{4} + 1$.

Therefore, $Z \in X_A(B)$ implies that $Z$ is mined between rounds $r - \frac{2xp}{4} - 1$ and $r + \frac{xp}{4} + 1$. Let $Y$ be the number of blocks mined between these rounds. By Chernoff bound:
$$
\Pr[Y \ge xp] \le \Pr[Y \ge \frac{4}{3} (\alpha+\beta) (\frac{3xp}{4} + 2)] = e^{-\Omega(x)}.
$$

Note the bound is appliable to any main chain of an honest player before round $r + \frac{xp}{4} + 1$ as well. The claim follows from the union bound.
\qed \end{proof}

\paragraph{\bf Proof of Lemma \ref{lem:stableconflictset}}
{\itshape
If $D(P(A),\lca(P(A),B)) > 2p$ then $X_A(B) = X_{P(A)}(B)$
}
\begin{proof}
From Definition \ref{def:conflictset}, $X_{P(A)}(B) \subseteq X_A(B)$. Suppose for contradiction $\exists Y: Y \in X_A(B) \setminus X_{P(A)}(B)$. From Definition \ref{def:conflictset}, $B \in \bar S_A$, therefore $B \in \past(P^i(A))$. Hence, $P^i(A) \notin \past(Y)$. Since $Y \notin \past(P(A))$, $D(A,\lca(A,Y)) > p$ and $Y \in S_A$, a contradiction.
\qed \end{proof}

\paragraph{\bf Proof of Theorem \ref{thm:main}}
{\itshape
Deviating from the protocol reduces the adversary's rewards and its proportion of rewards $\cali R^{c,b}$, except with negligible probability.
}
\begin{proof}
Honest blocks are not-stale, except with negligible probability (Theorem \ref{thm:honestarenonstale}). Block rewards are final and non-negative, except with negligible probability (Corollaries \ref{cor:rfinal} and \ref{cor:rnonnegative}). Hence, eventual value of $\cali R^{c,b}(B)$ for honest blocks $B$ depends only on $|X_A(B)|$. Since $Y \in X_A(Z) \iff Z \in X_A(Y)$, by increasing $|X_A(B)|$ of an honest block the adversary can only reduce the rewards of honest players (by $c |X_A(B)|$) if the adversary forfeits the same amount. Since the adversary constitutes a minority, its proportion of rewards decreases as well.

The adversary can also produce stale blocks, forfeiting the otherwise non-negative reward, while not changing the rewards of honest players.

Not referencing some known honest block directly increases $|X_A(B)|$. Withholding a block might only prevent some honest player from referencing it, thus increasing $|X_A(B)|$. Hence, any strategy effectively different from the protocol increases $|X_A(B)|$ of produced blocks, thus decreasing the amount and share of rewards of the adversary.
\qed \end{proof}

\subsection{Example Equilibrium}

\begin{example}\label{ex:1}
Four blockchain players with equal hashing power each adopt the following strategies:
\begin{itemize}
    \item Player 1 and 2: Follow the protocol.
    \item Player 3: Do not broadcast new blocks in the first round, otherwise follow the protocol.
    \item Player 4: If you receive three  blocks of other players (child blocks of the genesis block) at the beginning of the second round, then induce penalties for yourself and other players as much as possible forever. Otherwise follow the protocol.
\end{itemize}
Since Player 3 refrains from broadcasting blocks in the first round, Player 4 can never receive three blocks of other players in the second round, and thus the strategy of Player 4 is identical to following the protocol.

However, Player 3 deviates from the protocol. With some constant probability Players 1 and 2 broadcast a block each in the first round, so if Player 3 broadcast a block in the first round, Player 4 could receive three blocks. Then, this action {\it would} change the behaviour of Player 4 to cause penalties to herself and Player 3. Hence, the strategy profile is a Nash equilibrium.
\end{example}

\end{document}